\DeclareMathOperator*{\E}{\mathbb{E}}
\let\Pr\relax
\DeclareMathOperator*{\Pr}{\mathbb{P}}
\DeclareMathOperator{\poly}{poly}
\newcommand{\eqdef}{\mathbin{\stackrel{\rm def}{=}}}
\newcommand{\eps}{\varepsilon}
\newcommand{\R}{\mathbb{R}}
\newcommand{\Z}{\mathbb{Z}}
\newtheorem{theorem}{Theorem}
\newtheorem{corollary}[theorem]{Corollary}
\newtheorem{remark}[theorem]{Remark}
\newtheorem{question}[theorem]{Question}
\newcommand{\proofbelow}{3pt}
\newcommand{\afterproof}{\hfill $\blacksquare$ \par \vspace{\proofbelow}}
\renewenvironment{proof}{\noindent\textbf{Proof.}\,}{\afterproof}
\newenvironment{proofof}[1]{\noindent\textbf{Proof} \,(of #1).\,}{\afterproof}
\newcommand{\QuestionName}[1]{\label{que:#1}}
\renewcommand{\lg}{\log}
\newcommand{\Perm}{\pi}
\newcommand{\Question}[1]{Question~\ref{que:#1}}
\begin{document}

\author{Kasper Green Larsen\thanks{Aarhus University. \texttt{larsen@cs.au.dk}. Supported by Center for Massive Data Algorithmics, a Center of the Danish National Research Foundation, grant DNRF84.}
  \and Jelani Nelson\thanks{Harvard University. \texttt{minilek@seas.harvard.edu}. Supported by NSF
  CAREER award CCF-1350670.}
\and Huy L. Nguy$\tilde{\hat{\mbox{e}}}$n\thanks{Princeton
  University. \texttt{hlnguyen@cs.princeton.edu}. Supported in part by
  NSF CCF-0832797, an IBM Ph.D. fellowship and a Siebel Scholarship.}}

\title{Time lower bounds for nonadaptive turnstile streaming algorithms}

\maketitle

\begin{abstract}
We say a turnstile streaming algorithm is {\em non-adaptive} if, during updates, the memory cells written and read depend only on the index being updated and random coins tossed at the beginning of the stream (and not on the memory contents of the algorithm). Memory cells read during {\em queries} may be decided upon adaptively. All known turnstile streaming algorithms in the literature are non-adaptive.

We prove the first non-trivial update time lower bounds for both randomized and deterministic turnstile streaming algorithms, which hold when the algorithms are non-adaptive. While there has been abundant success in proving space lower bounds, there have been no non-trivial update time lower bounds in the turnstile model. Our lower bounds hold against classically studied problems such as heavy hitters, point query, entropy estimation, and moment estimation. In some cases of deterministic algorithms, our lower bounds nearly match known upper bounds.
\end{abstract}

\section{Introduction}
\label{sec:intro}
In the turnstile streaming model of computation \cite{Muthukrishnan05} there is some vector $v\in\R^n$ initialized to $\vec{0}$, and we must provide a data structure that processes coordinate-wise updates to $v$. An update of the form $(i,\Delta)$ causes the change $v_i\leftarrow v_i + \Delta$, where $\Delta\in\{-M,\ldots,M\}$. Occasionally our data structure must answer queries for some function of $v$. In many applications $n$ is extremely large, and thus it is desirable to provide a data structure with space consumption much less than $n$, e.g.\ polylogarithmic in $n$. For example, $n$ may be the number of valid IP addresses ($n = 2^{128}$ in IPv6), and $v_i$ may be the number of packets sent from source IP address $i$ on a particular link. A query may then ask for the support size of $v$ (the ``distinct elements'' problem \cite{FlajoletM85}), which was used for example to estimate the spread of the Code Red worm after filtering the packet stream based on the worm's signature \cite{EstanVF06,Moore01}. Another query may be the $\ell_2$ norm of $v$ \cite{AMS99}, which was used by AT{\&}T as part of a packet monitoring system \cite{KrishnamurthySZC03, ThorupZ12}. In some examples there is more than one possible query to be asked; in ``point query'' problems a query is some $i\in[n]$ and the data structure must output $v_i$ up to some additive error, e.g.\ $\eps\|v\|_p$ \cite{CCF04,CM05}. Such point query data structures are used as subroutines in the heavy hitters problem, where informally the goal is to output all $i$ such that $v_i$ is ``large''. If the data structure is linearly composable (meaning that data structures for $v$ and $v'$ can be combined to form a data structure for $v-v'$), heavy hitters data structures can be used for example to detect trending topics in search engine query streams \cite{GoogleTrends, GoogleZeitgeist, CCF04}. In fact the point query structure \cite{CCF04} has been implemented in the log analysis language Sawzall at Google \cite{PikeDGQ05}.

Coupled with the great success in providing small-space data structures for various turnstile streaming problems has been a great amount of progress in proving space lower bounds, i.e.\ theorems which state that {\em any} data structure for some particular turnstile streaming problem must use space (in bits) above some lower bound. For example, tight or nearly tight space lower bounds are known for the distinct elements problem \cite{AMS99,Woodruff04,WoodruffThesis,JayramKS08}, $\ell_p$ norm estimation \cite{AMS99,BJKS04,CKS03,Woodruff04,KNW10a,Gronemeier09,Jayram09,JW13,WoodruffThesis,JayramKS08}, heavy hitters \cite{JowhariST11}, entropy estimation \cite{ChakrabartiCM10,KNW10a}, and several other problems.

While there has been much previous work on understanding the space required for solving various streaming problems, much less progress has been made regarding update time complexity: the time it takes to process an update in the stream.  This is despite strong motivation, since in several applications the data stream may be updated at an extremely fast rate; for example, \cite{ThorupZ12} reported in their application for $\ell_2$-norm estimation that their system was constrained to spend only 130 nanoseconds per packet to keep up with high network speeds. Of course, without any space constraint fast update time is trivial: store $v$ in memory explicitly and spend constant time to add $\Delta$ to $v_i$ after each update. Thus an interesting data structural issues arises: how can we simultaneously achieve small space and low update time for turnstile streaming data structures?  Surprisingly, very little is understood about this question for any turnstile streaming problem.  For some problems we have very fast algorithms (e.g.\ constant update time for distinct elements \cite{KNW10b}, and also for $\ell_2$ estimation \cite{ThorupZ12}), whereas for others we do not (e.g. super-constant time for $\ell_p$ estimation for $p\neq 2$ \cite{KNPW11} and heavy hitters problems \cite{CCF04,CM05}), and we do not have proofs precluding the possibility that a fast algorithm exists. Indeed, the only previous update time lower bounds for streaming problems are those of Clifford and Jalsenius \cite{CJ11} and Clifford, Jalsenius, and Sach \cite{CJS13} for streaming multiplication and streaming Hamming distance computation, respectively. These problems are significantly harder than e.g.\ $\ell_p$ estimation (the lower bounds proved apply even for super-linear space usage) and there appears to be no way of extending their approach to obtain lower bounds for the problems above. Also, the time lower bounds in \cite{CJ11,CJS13} are for the sum of the update and query time, whereas in many applications such as network traffic monitoring, updates are much more frequent and thus it is more acceptable for queries to be slower. Lastly but importantly, the problems considered in \cite{CJ11,CJS13} were not in the turnstile model.

A natural model for upper bounds in the streaming literature (and also for data structures in general), is the word RAM model: basic arithmetic operations on machine words of $w$ bits each cost one unit of time. In the data structure literature, one of the strongest lower bounds that can be proven is in the cell probe model \cite{MP69,Fredman78,Yao81}, where one assumes that the data structure is broken up into $S$ words each of size $w$ bits and cost is only incurred when the user reads a memory cell from the data structure ($S$ is the space). Lower bounds proven in this model hold against any algorithm operating in the word RAM model. In recent years there has been much success in proving data structure lower bounds in the cell probe model (see for example \cite{PatrascuThesis, LarsenThesis}) using techniques from communication complexity and information theory. Can these techniques be imported to obtain time lower bounds for clasically studied turnstile streaming problems?

\begin{question}\QuestionName{time-lb}
Can we use techniques from cell probe lower bound proofs to lower bound the update time complexity for classical streaming problems?
\end{question}

Indeed the lower bounds for streaming multiplication and Hamming distance computation were proved using the \emph{information transfer} technique of P{\v a}tra{\c s}cu and Demaine \cite{patrascuInformationTransfer}, but as mentioned, this approach appears useless against the turnstile streaming problems above.

There are a couple of obvious problems to attack for \Question{time-lb}. For example, for many streaming problems one can move from, say, $2/3$ success probability to $1-\delta$ success probability by running $\Theta(\log(1/\delta))$ instantiations of the algorithm in parallel then outputting the median answer. This is possible whenever the output of the algorithm is numerical, such as for example distinct elements, moment estimation, entropy estimation, point query, or several other problems. Note that doing so increases both the space and update time complexity of the resulting streaming algorithm by a $\Theta(\log(1/\delta))$ factor. Is this necessary? It was shown that the blowup in {\em space} is necessary for several problems by Jayram and Woodruff \cite{JW13}, but absolutely nothing is known about whether the blowup in time is required. Thus, any non-trivial lower bound in terms of $\delta$ would be novel.

Another problem to try for \Question{time-lb} is the heavy hitters problem. Consider the $\ell_1$ heavy hitters problem: a vector $v$ receives turnstile updates, and during a query we must report all $i$ such that $|v_i| \ge \eps \|v\|_1$. Our list is allowed to contain some false positives, but not ``too false'', in the sense that any $i$ output should at least satisfy $|v_i| \ge (\eps/2)\|v\|_1$. Algorithms known for this problem using non-trivially small space, both randomized \cite{CM05} and deterministic \cite{GM07,NNW14}, require $\tilde{\Omega}(\log n)$ update time. Can we prove a matching lower bound?

\paragraph{Our Results.}
In this paper we present the first update time lower bounds for a range of classical turnstile streaming problems. The lower bounds apply to a restricted class of randomized streaming algorithms that we refer to as randomized \emph{non-adaptive} algorithms.
We say that a randomized streaming algorithm is {\em non-adaptive} if:
\begin{itemize}
\item Before processing any elements of the stream, the algorithm may toss some random coins.
\item The memory words read/written to (henceforth jointly referred to as probed) on any update operation $(i,\Delta)$ are completely determined from $i$ and the initially tossed random coins.
\end{itemize}
Thus a non-adaptive algorithm may not decide which memory words to probe based on the current contents of the memory words. Note that in this model of non-adaptivity, the random coins can encode e.g.\ a hash function chosen (independent of the stream) from a desirable family of hash functions and the update algorithm can choose what to probe based on the hash function. It is only the input-specific contents of the probed words that the algorithm may not use to decide which words to probe. To the best of our knowledge, all the known algorithms for classical turnstile streaming problems are indeed \emph{non-adaptive}, in particular \emph{linear sketches} (i.e.\ maintaining $\Pi v$ in memory for some $r\times n$ matrix $\Pi$, $r\ll n$). We further remark that our lower bounds only require the update procedure to be non-adaptive and still apply if adaptivity is used to answer queries.

For the remainder of the paper, for ease of presentation we assume that the update increments are (possibly negative) integers bounded by some $M \le \poly(n)$ in magnitude, and that the number of updates in the stream is also at most $\mathrm{poly}(n)$. We further assume the trans-dichotomous model \cite{FredmanW93,FredmanW94}, i.e.\ that the machine word size $w$ is $\Theta(\log n)$ bits. This is a natural assumption, since typically the streaming literature assumes that basic arithmetic operations on a value $|v_i|$, the index of the current position in the stream, or an index $i$ into $v$ can be performed in constant time.

We prove lower bounds for the following types of queries in turnstile streams. For each problem listed, the query function takes no input (other than point query, which takes an input $i\in[n]$). Each query below is accompanied by a description of what the data structure should output.

\begin{itemize}
\item {\bf $\ell_1$ heavy hitter:} return an index $i$ such that $|v_i| \ge \|v\|_{\infty} - \|v\|_1/2$.
\item {\bf Point query:} given $i$ at query time, returns $v_i \pm \|v\|_1 / 2$.
\item {\bf $\ell_p/\ell_q$ norm estimation ($1 \le q \le p \le \infty$):}  returns $\|v\|_p \pm \|v\|_q / 2$.
\item {\bf Entropy estimation.} returns a $2$-approximation of the entropy of the distribution which assigns probability $v_i/\|v\|_1$ to $i$ for each $i\in[n]$.
\end{itemize}

The lower bounds we prove for non-adaptive streaming algorithms are as follows ($n^{-O(1)} \le \delta \le 1/2-\Omega(1)$ is the failure probability of a query):
\begin{itemize}
\item Any randomized non-adaptive streaming algorithm for point query, $\ell_p/\ell_q$ estimation with $1\le q\le p\le \infty$, and entropy estimation, must have worst case update time $$t_u = \Omega\left(\frac{\log (1 /\delta)}{\sqrt{\lg n \log(eS/t_u)}}\right) .$$

We also show that any {\em deterministic} non-adaptive streaming algorithm for the same problems must have worst case update time $t_u = \Omega(\log n / \log(eS/t_u))$.

\item Any randomized non-adaptive streaming algorithm for $\ell_1$ heavy hitters, must have worst case update time $$t_u = \Omega\left(\min\left\{\sqrt{\frac{\log (1/\delta)}{\log(eS/t_u)}} , \frac{ \log (1/\delta)}{\sqrt{\lg t_u \cdot \lg(eS/t_u)}}\right\}\right).$$ Any deterministic non-adaptive streaming algorithm for the same problem must have worst case update time $t_u = \Omega\left(\frac{\log n}{\log(eS/t_u)}\right)$.
\end{itemize}
\begin{remark}
\textup{
The deterministic bound above for point query matches two previous upper bounds for point query~\cite{NNW14}, which use error-correcting codes to yield deterministic point query data structures. Specifically, for space $S = O(\log n)$, our lower bound implies $t_u=\Omega(\log n)$, matching an upper bound based on random codes. For space $O((\log n/\log \log n)^2)$, our lower bound implies $t_u = \Omega(\log n/\log \log n)$, matching an upper bound based on Reed-Solomon codes. Similarly, the deterministic bound above for deterministic $\ell_2/\ell_1$ norm estimation matches the previous upper bound for this problem~\cite{NNW14}, showing that for the optimal space $S = \Theta(\log n)$, the fastest query time of non-adaptive algorithms is $t_u=\Theta(\log n)$.
}

\textup{
These deterministic upper bounds are also in the cell probe model. In particular, the point query data structure based on random codes and the norm estimation data structure require access to combinatorial objects that are shown to exist via the probabilistic method, but for which we do not have explicit constructions. The point query structure based on Reed-Solomon codes can be implemented in the word RAM model with $t_u = \tilde{O}(\log n)$ using fast multipoint evaluation of polynomials. This is because performing an update, in addition to accessing $O(\log n/\log\log n)$ memory cells of the data structure, requires evaluating a degree-$O(\log n/\log\log n)$ polynomial on $O(\log n/\log\log n)$ points to determine which memory cells to access (see \cite{NNW14} for details).
}
\end{remark}

\begin{remark}
\textup{
The best known randomized upper bounds are $S = t_u = O(\log(1/\delta))$ for point query \cite{CM05} and $\ell_p/\ell_p$ estimation for $p\le 2$ \cite{ThorupZ12,KNPW11}. For entropy estimation the best upper bound has $S = t_u = \tilde{O}((\log n)^2\log(1/\delta))$ \cite{HNO08}. For $\ell_1$ heavy hitters the best known upper bound (in terms of $S$ and $t_u$) has $S = t_u = O(\log(n/\delta))$.
}
\end{remark}

In addition to being the first non-trivial turnstile update time lower bounds, we also managed to show that the update time has to increase polylogarithmically in $1/\delta$ as the error probability $\delta$ decreases, which is achieved with the typical reduction using $\lg(1/\delta)$ independent copies.

Our lower bounds can also be viewed in another light. If one is to obtain constant update time algorithms for the above problems, then one has to design algorithms that are \emph{adaptive}. Since all known upper bounds have non-adaptive updates, this would require a completely new strategy to designing turnstile streaming algorithms.

\paragraph{Technique.} 
As suggested by \Question{time-lb}, we prove our lower bounds using recent ideas in cell probe lower bound proofs. More specifically, we use ideas from the technique now formally known as \emph{cell sampling}~\cite{Gal03thecell, panigrahyMetric, Larsen12highercell}. This technique derives lower bounds based on one key observation: if a data structure/streaming algorithm probes $t$ memory words on an update, then there is a set $C$ of $t$ memory words such that at least $m/S^t$ updates probe only memory words in $C$, where $m$ is the number of distinct updates in the problem (for data structure lower bound proofs, we typically consider queries rather than updates, and we obtain tighter lower bounds by forcing $C$ to have near-linear size). 

We use this observation in combination with the standard one-way communication games typically used to prove streaming space lower bounds. In these games, Alice receives updates to a streaming problem and Bob receives a query. Alice runs her updates through a streaming algorithm for the corresponding streaming problem and sends the resulting $Sw$ bit memory footprint to Bob. Bob then answers his query using the memory footprint received from Alice. By proving communication lower bounds for \emph{any} communication protocol solving the one-way communication game, one obtains space lower bounds for the corresponding streaming problem.

At a high level, we use the cell sampling idea in combination with the one-way communication game as follows: if Alice's non-adaptive streaming algorithm happens to ``hash'' her updates such that they all probe the same $t$ memory cells, then she only needs to send Bob the contents of those $t$ cells. If $t < S$, this gives a communication saving over the standard reduction above. We formalize this as a general sketch-compression theorem, allowing us to compress the memory footprint of any non-adaptive streaming algorithm at the cost of increasing the error probability. This general theorem has the advantage of allowing us to re-use previous space lower bounds that have been proved using the standard reduction to one-way communication games, this time however obtaining lower bounds on the update time. We demonstrate these ideas in Section~\ref{sec:compress} and also give a more formal definition of the classic one-way communication game. 
\section{Sketch Compression}
\label{sec:compress}

In the following, we present a general theorem for compressing non-adaptive sketches. Consider a streaming problem in which we are to maintain an $n_0$-dimensional vector $v$. Let $U$ be the {\em update domain}, where each element of $U$ is a pair $(i,\Delta)\in [n] \times \{-M,\ldots, M\}$ for some $M=\poly(n)$. We interpret an update $(i,\Delta)\in U$ as having the effect $v[i] \leftarrow v[i] + \Delta$. Initially all entries of $v$ are $0$. We also define the {\em query domain} $Q = \{q_1,\ldots,q_r\}$, where each $q_i\in Q$ is a function $q_i:\Z^{n} \rightarrow\R$. With one-way communication games in mind, we define the input to a streaming problem as consisting of two parts. More specifically, the {\em pre-domain} $D_{pre}\subseteq U^a$ consists of sequences of $a$ update operations. The {\em post-domain} $D_{post}\subseteq \{U\cup Q\}^b$ consists of sequences of $b$ updates and/or queries. Finally, the {\em input domain} $D\subseteq D_{pre} \times D_{post}$ denotes the possible pairings of $a$ initial updates followed by $b$ intermixed queries and updates. The set $D$ defines a streaming problem $P_D$.

We say that a randomized streaming algorithm for a problem $P_D$ uses $S$ words of space if the maximum number of memory words used when processing any $d\in D$ is $S$. Here a memory word consists of $w = \Theta(\lg n)$ bits. The worst case update time $t_u$ is the maximum number of memory words read/written to upon processing an update operation for any $d\in D$. The error probability $\delta$ is defined as the maximum probability over all $d\in D$ and queries $q_i\in d\cap Q$, of returning an incorrect results on query $q_i$ after the updates preceding it in the sequence $d$.

A streaming problem $P_D$ of the above form naturally defines a one-way communication game: on an input $(d_1,d_2)\in D$, Alice receives $d_1$ (the first $a$ updates) and Bob receives $d_2$ (the last $b$ updates and/or queries). Alice may now send a message to Bob based on her input and Bob must answer all queries in his input as if streaming through the concatenated sequence of operations $d_1\circ d_2$. The error probability of a communication protocol is defined as the maximum over all $d\in D$ and $q_i\in \{d\cap Q\}$, of returning an incorrect results on $q_i$ when receiving $d$ as input.

Traditionally, the following reduction is used:

\begin{theorem}
If there is a randomized streaming algorithm for $P_D$ with space usage $S$ and error probability $\delta$, then there is a public coin protocol for the corresponding one-way communication game in which Alice sends $Sw$ bits to Bob and the error probability is $\delta$.
\end{theorem}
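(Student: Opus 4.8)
The plan is to simulate the streaming algorithm directly across the cut between Alice and Bob, using the public coin string to supply the algorithm's random bits. First, Alice and Bob interpret the shared public randomness as the random coins tossed by the streaming algorithm at the beginning of the stream. Alice then runs the algorithm on her input $d_1$ (the first $a$ updates), producing a memory state that by assumption occupies at most $S$ words of $w = \Theta(\lg n)$ bits each, i.e.\ at most $Sw$ bits. Alice sends exactly this memory footprint to Bob (padding to a fixed length $Sw$ if necessary so the message is well-formed for every $d_1$; this costs nothing).

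Next, Bob resumes the execution of the streaming algorithm from the state encoded in Alice's message, feeding it his sequence $d_2$ of intermixed updates and queries in order, and outputting whatever the algorithm outputs on each query $q_i \in d_2 \cap Q$. Because the contents of the $Sw$-bit message completely describe the algorithm's memory after processing $d_1$, and because both parties use the same coin string, Bob's simulation is identical, bit for bit, to running the streaming algorithm on the concatenated stream $d_1 \circ d_2$. Hence for every $d = (d_1,d_2) \in D$ and every query $q_i$ in $d_2$, the probability (over the public coins) that Bob answers $q_i$ incorrectly equals the probability that the streaming algorithm errs on $q_i$ after the updates preceding it in $d_1 \circ d_2$, which is at most $\delta$ by hypothesis. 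Thus the protocol has communication cost $Sw$ and error probability $\delta$.

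There is essentially no obstacle: the statement is a direct packaging of ``a sketch is a communication message.'' The only points worth spelling out are (i) why public (rather than private) coins are needed here, namely that the algorithm's randomness must be simultaneously available to Alice (to compute the footprint) and to Bob (to continue the simulation consistently), and (ii) that the $S$-word space bound is a worst-case bound over all of $D$, so the message length $Sw$ is likewise uniform over all inputs. Neither requires any calculation.
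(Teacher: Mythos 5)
Your proof is correct and takes the same approach as the paper's (which is simply ``Alice runs the algorithm on her input and sends the memory image to Bob; Bob continues the algorithm and outputs the answers''). The extra remarks on public coins and padding are fine but not needed for the claim.
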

\begin{proof}
Alice simply runs the streaming algorithm on her input and sends the memory image to Bob. Bob continues the streaming algorithm and outputs the answers.
\end{proof}

Recall from Section~\ref{sec:intro} that a randomized streaming algorithm is {\em non-adaptive} if:
\begin{itemize}
\item Before processing any elements of the stream, the algorithm may toss some random coins.
\item The memory words read/written to (henceforth jointly referred to as probed) on any update operation $(i,\Delta)$ is completely determined from $i$ and the initially tossed random coins.
\end{itemize}
We show that for non-adaptive algorithms, one can efficiently reduce the communication by increasing the error probability. We require some additional properties of the problem however: we say that a streaming problem $P_D$ is {\em permutation invariant} if:
\begin{itemize}
\item For any {\em permutation} $\pi:[n]\rightarrow[n]$, it holds that $\pi(q_i(v))=(\pi(q_i)(\pi(v)))$ for all $q_i\in Q$. Here $\pi(v)$ is the $n$-dimensional vector with value $v[i]$ in entry $\pi[i]$, $\pi(q_i)$ maps all indices (if any) in the definition of the query $q_i$ wrt. $\pi$ and $\pi(q_i(v))$ maps all indices in the answer $q_i(v)$ (if any) wrt. $\pi$.
\end{itemize}
Observe that point query, $\ell_p$ estimation, entropy estimation and heavy hitters all are permutation invariant problems. For point query, we have $\pi(q_i(v))=q_i(v)$ since answers contain no indices, but $\pi(q_i)$ might differ from $q_i$ since queries are defined from indices. For $\ell_p$ estimation and entropy estimation, we simply have $\pi(q_i(v))=q_i(v)$ and $\pi(q_i)=q_i$ since neither queries or answers involve indices. For heavy hitters we have $\pi(q_i)=q_i$ (there is only one query), but we might have that $\pi(q_i(v)) \neq q_i(v)$ since the answer to the one query is an index. We now have the following:

\begin{theorem}
\label{thm:main}
If there is a randomized non-adaptive streaming algorithm for a permutation invariant problem $P_D$ with $a \leq \sqrt{n}$, having space usage $S$, error probability $\delta$, and worst case update time $t_u\leq (1/2)(\lg n/\lg(eS/t_u)) $, then there is a private coin protocol for the corresponding one-way communication game in which Alice sends at most $a\lg e + t_ua\lg(eS/t_u) + \lg a + \lg\lg (en/a) + 2t_u w + 1$ bits to Bob and the error probability is $2e^{a}\cdot (eS/t_u)^{t_ua}\delta$.
\end{theorem}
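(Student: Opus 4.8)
The plan is to marry the cell-sampling observation from the introduction with the trivial one-way reduction (the preceding Theorem) and with permutation invariance. First I would freeze the algorithm's random string $R$. Non-adaptivity then turns the update behaviour into a fixed map $\mathrm{probe}_R:[n]\to\binom{[S]}{\le t_u}$ sending an index to the (at most $t_u$) cells touched by updates to it. There are at most $\binom{S}{\le t_u}\le (eS/t_u)^{t_u}=:N$ such cell-sets, so by pigeonhole there is a fixed set $C_R$ of at most $t_u$ cells for which $G_R:=\{\,i\in[n] : \mathrm{probe}_R(i)\subseteq C_R\,\}$ has $|G_R|\ge n/N$. The hypothesis $t_u\le (1/2)\lg n/\lg(eS/t_u)$ is tailored precisely so that $N\le\sqrt n$ and hence $|G_R|\ge\sqrt n\ge a$: the good set is large enough to host all of Alice's at most $a$ distinct update indices.

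Next I would build the protocol. Alice's private randomness is the string $R$ (which, via a standard Newman-style reduction of the public randomness in the trivial protocol, I may assume is only $O(w)$ bits at the cost of doubling the error), together with a permutation $\pi$ of $[n]$ drawn uniformly subject to mapping Alice's update-index set $I$ into $G_R$, and chosen canonically (sending the sorted indices of $I$ to the smallest available elements of $G_R$) and to be the identity off $I\cup\pi(I)$. Because $P_D$ is permutation invariant, running the streaming algorithm on $\pi(d_1\circ d_2)$ and pushing the reported answers back through $\pi^{-1}$ is equivalent to solving the original instance. Since $\pi(I)\subseteq G_R$, every update Alice makes touches only cells of $C_R$, so after she has processed $\pi(d_1)$ the memory image differs from the algorithm-determined initial image only inside $C_R$. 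Alice therefore sends: the string $R$; the contents of the $\le t_u$ cells of $C_R$ (at most $t_u w$ bits, folded with the $O(w)$ for $R$ into the stated $2t_uw$ term); the probe-sets $\mathrm{probe}_R(\pi(i))$ of her $\le a$ indices, from which Bob reads off which cells Alice touched and costs about $t_u a\lg(eS/t_u)$ bits; and $O(\lg a+\lg\lg(en/a))$ further bookkeeping bits (how many of her indices are distinct, and similar auxiliary parameters). Knowing $R$, Bob recomputes $\mathrm{probe}_R$, hence $C_R$, $G_R$ and the canonical $\pi$; he fills $C_R$ with the received contents and every other cell with its initial value, recovering exactly the memory after $\pi(d_1)$; he then applies $\pi$ to his operations $d_2$, finishes the streaming algorithm, and maps the answers back by $\pi^{-1}$, which by permutation invariance are correct for $d_1\circ d_2$.

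The error bound is the crux, and this is the step I would treat most carefully. For any \emph{fixed} permutation $\pi_0$, the algorithm run on the fixed input $\pi_0(d_1\circ d_2)$ errs with probability at most $2\delta$ over $R$; averaging over a uniformly random $\pi$ therefore gives $\Pr[E]\le 2\delta$, where $E$ is the event ``the algorithm errs on $\pi(d_1\circ d_2)$''. On the other hand, for a uniformly random $\pi$ the event $F=\{\pi(I)\subseteq G_R\}$ has probability $\binom{|G_R|}{|I|}/\binom{n}{|I|}\ge (eN)^{-a}$, using $|G_R|\ge n/N$, $|I|\le a\le\sqrt n$, and a Stirling-type estimate to absorb the lower-order corrections into the $e^{-a}$ factor. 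Our protocol samples $\pi$ from exactly the distribution of a uniform permutation conditioned on $F$, so its error is $\Pr[E\mid F]\le \Pr[E]/\Pr[F]\le 2\delta\cdot (eN)^a=2e^{a}(eS/t_u)^{t_ua}\delta$, as claimed; adding up the pieces of Alice's message gives the stated bit bound.

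The main obstacle I anticipate is not the pigeonhole or the clean ``$\Pr[E\mid F]\le\Pr[E]/\Pr[F]$'' trick, but the two quantitative balancing acts around them: squeezing out a bound of $(eN)^{-a}$ (rather than something exponentially smaller) on $\Pr[F]$, which forces $|G_R|$ — hence $N$, hence $t_u$ — and the hypothesis $a\le\sqrt n$ to be used with some care, and requires choosing $\pi$ so that $I$ and $\pi(I)$ sit disjointly inside $G_R$; and keeping $\pi$ local and canonical enough that Bob can reconstruct its action on every index he needs for free, so that Alice's message genuinely fits within $a\lg e+t_ua\lg(eS/t_u)+\lg a+\lg\lg(en/a)+2t_uw+1$ bits.
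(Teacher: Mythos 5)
Your overall skeleton is right and matches the paper's opening moves: freeze the randomness $R$, apply pigeonhole to $\mathrm{probe}_R$ to extract a cell set $C_R$ of size $\le t_u$ whose preimage $G_R$ has size $\ge n/(eS/t_u)^{t_u}\ge\sqrt n\ge a$, then use permutation invariance to relocate Alice's indices into $G_R$ so her entire update stream touches only $C_R$. However, the mechanism you propose for choosing and sharing $\pi$, and the error analysis built on it, both have genuine gaps.

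The first gap is that Bob cannot reconstruct your ``canonical'' $\pi$. You define $\pi$ by mapping the sorted elements of $I$ to the smallest elements of $G_R$, but $I$ is Alice's private input --- it is exactly what Bob is trying to learn. Knowing $R$ gives Bob $G_R$, but not $I$, and the extra data you have Alice send (the probe-sets $\mathrm{probe}_R(\pi(i))$ and the contents of $C_R$) do not determine $\pi$ either, since many indices can share a probe-set and the cell contents do not encode which permuted indices were hit. Bob needs $\pi$ on every index appearing in his own suffix $d_2$ (and $\pi^{-1}$ on any index-valued answers), so a permutation Bob cannot compute is fatal. The second gap is the ``$\Pr[E\mid F]\le \Pr[E]/\Pr[F]$'' step. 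That inequality is valid for a $\pi$ that is drawn uniformly at random and then conditioned on $F$; it says nothing about the single deterministic permutation you actually use. If instead you really do sample $\pi$ uniformly conditioned on $F$, the reconstruction problem returns with full force and cannot be repaired by Newman, because the conditioning event $F$ depends on Alice's private input $I$, so $\pi$ cannot be public coins. And for a worst-case input $(d_1,d_2)$ you also cannot bound $\Pr_R[\text{err on }\pi_{R,I}(d_1\circ d_2)]$ just from the fact that each \emph{fixed} $\pi_0$ has error $O(\delta)$ over $R$, since $\pi_{R,I}$ co-varies with $R$.

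The paper resolves both problems at once with a step you do not have. It first invokes Yao's principle to reduce to a \emph{deterministic} protocol against a fixed hard distribution $\mu$, which makes ``error'' an average over $\mu$ rather than a per-input guarantee. It then shows, by Markov plus a union bound over all $\binom{n}{a}$ possible index sets, that there is a single \emph{pre-agreed} family of $k=O\bigl(e^a(eS/t_u)^{t_ua}\,a\log(en/a)\bigr)$ permutations such that for every $I'$ some $\rho_i$ both maps $I'$ into $I^C$ and keeps the $\mu$-conditional error within a factor $2e^a(eS/t_u)^{t_ua}$. Alice then sends the $\lceil\lg k\rceil$-bit index of that permutation plus the $\le 2t_uw$ bits for $C$, and Bob, who knows the whole family in advance, has $\rho_i$ in full. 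The $t_ua\lg(eS/t_u)+a\lg e+\lg a+\lg\lg(en/a)+1$ bits in the theorem statement are precisely $\lg k$, not a probe-set encoding; that part of your accounting is a coincidence of the same expression arising from a different object, and it would not recover the missing $a\lg e$ term in your version anyway.
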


Before giving the proof, we present the two main ideas. First observe that once the random choices of a non-adaptive streaming algorithm have been made, there must be a large collection of indices $I \subseteq [n]$ for which all updates $(i,\Delta)$, where $i \in I$, probe the same small set of memory words (there are at most $\binom{S}{t_u}$ distinct sets of $t_u$ words to probe). If all of Alice's updates probed only the same set of $t_u$ words, then Alice could simply send those words to Bob and we would have reduced the communication to $t_uw$ bits. To handle the case where Alice's updates probe different sets of words, we make use of the permutation invariant property. More specifically, we show that Alice and Bob can agree on a collection of $k$ permutations of the input indices, such that one of these permutes all of Alice's updates to a new set of indices that probe the same $t_u$ memory cells. Alice can then send this permutation to Bob and they can both alter their input based on the permutation. Therefore Alice and Bob can solve the communication game with $\lg k + t_uw$ bits of communication. The permutation of indices unfortunately increases the error probability as we shall see below. With these ideas in mind, we now give the proof of Theorem~\ref{thm:main}.
\\\\
\begin{proofof}{Theorem~\ref{thm:main}}
Alice and Bob will permute their communication problem and use the randomized non-adaptive streaming algorithm on this transformed instance to obtain an efficient protocol for the original problem.

By Yao's principle, we have that the randomized complexity of a private coin one-way communication game with error probability $\delta$ equals the complexity of the best deterministic algorithm with error probability $\delta$ over the worst distribution. Hence we show that for any distribution $\mu$ on $D\subseteq D_{pre}\times D_{post}$, there exists a deterministic one-way communication protocol with $t_ua\lg S + \lg a + \lg\lg n + t_u w$ bits of communication and error probability $2e^{a}\cdot (eS/t_u)^{t_ua}\delta$. We let $\mu_1$ denote the marginal distribution over $D_{pre}$ and $\mu_2$ the marginal distribution over $D_{post}$.

Let $\mu = (\mu_1,\mu_2)$ be a distribution on $D$. Define from this a new distribution $\gamma = (\gamma_1,\gamma_2)$: pick a uniform random permutation $\Perm$ of $[n]$. Now draw an input $d$ from $\mu$ and permutate all indices of updates (and queries if defined for such) using the permutation $\Perm$. The resulting sequence $\Perm(d)$ is given to Alice and Bob as before, which defines the new distribution $\gamma = (\gamma_1,\gamma_2)$. We use $A\sim \mu_1$ to denote the r.v.\ providing Alice's input drawn from distribution $\mu_1$ and $\Perm(A) \sim \gamma_1$ denotes the random variable providing Alice's transformed input. We define $B\sim \mu_2$ and $\Perm(B)\sim \gamma_2$ symmetrically.

Recall we want to solve the one-way communication game on $A$ and $B$. To do this, first observe that by fixing the random coins, the randomized non-adaptive streaming algorithm gives a non-adaptive and deterministic streaming algorithm that has error probability $\delta$ and space usage $S$ under distribution $\gamma$. Before starting the communication protocol on $A$ and $B$, Alice and Bob both examine the algorithm (it is known to both of them). Since it is non-adaptive and deterministic, they can find a set of $t_u$ memory words $C$, such that at least
$$
\frac n{\binom{S}{t_u}} > \frac n{(eS/t_u)^{t_u}} \ge \sqrt{n} \ge a
$$
indices $i\in[n]$ satisfy that any update $(i,\Delta)$ probes only memory words in $C$. We let $I^C$ denote the set of all such indices (again, $I^C$ is known to both Alice and Bob). Alice and Bob also agree on a set of permutations $\{\rho_1,\ldots,\rho_k\}$ (we determine a value for $k$ later), such that for any set of at most $a$ indices, $I'$, that can occur in Alice's updates, there is at least one permutation $\rho_i$ where:
\begin{itemize}
\item $\rho_i(j) \in I^C$ for all $j\in I'$
\item Let $I(A)$ denote the (random) indices of the updates in $A$. Then the probability that the non-adaptive and deterministic protocol errs on input $\rho_i(A)$, conditioned on $I(A) = I'$, is at most $2e^{a}\cdot (eS/t_u)^{t_ua}\cdot \eps_{I(A) = I'}$ where $\eps_{I(A)=I'}$ is the error probability of the deterministic and non-adaptive streaming algorithm on distribution $\gamma$, conditioned on $I(A) = I'$.
\end{itemize}
Again, this set of permutations is known to both players. The protocol is now simple: upon receiving $A$, Alice finds the index $i$ of a permutation $\rho_i$ satisfying the above for the indices $I(A)$. She then sends this index to Bob and runs the deterministic and non-adaptive algorithm on $\rho_i(A)$. She forwards the addresses and contents of all memory words in $C$ as well. This costs a total of $\lg k + |C|(w+\lg S(n)) \le \lg k +2t_u(n) w$ bits. Note that no words outside $C$ are updated during Alice's updates. Bob now remaps his input $B$ according to $\rho_i$ and runs the deterministic and non-adaptive streaming algorithm on his updates and queries. Observe that for each query $q_j \in B$, Bob will get the answer $\rho_i(q_j)(\rho_i(v))$ if the algorithm does not err, where $v$ is the ``non-permuted'' vector after processing all of Alice's updates $A$ and all updates in $B$ preceeding the query $q_j$. For each such answer, he computes $\rho_i^{-1}(\rho_i(q_j)(\rho_i(v)))$. Since $P_D$ is permutation invariant, we have $\rho_i^{-1}(\rho_i(q_j)(\rho_i(v)))=\rho_i^{-1}(\rho_i(q_j(v)))=q_j(v)$. The final error probability (over $\mu$) is hence at most $2e^{a}\cdot (eS/t_u)^{t_ua}\cdot \delta$ since $\E_{I'} \eps_{I(A)=I'} = \delta$.

We only need a bound on $k$. For this, fix one set $I'$ of at most $a$ indices in $[n]$ and consider drawing $k$ uniform random permutations. For each such random permutation $\Gamma$, note that $\Gamma(I')$ is distributed as $I(\Perm(A))$ conditioned on $I(A) = I'$. Hence, the expected error probability when using the map $\Gamma$ (expectation over choice of $\Gamma$) is precisely $\eps_{I(A) = I'}$. We also have
$$\Pr(\Gamma(I')\subseteq I^C) = \frac{\binom{|I^C|}{|I'|}}{\binom{n}{|I'|}} \ge \left(\frac{|I^C|}{en}\right)^{|I'|} \ge e^{-a}\cdot \left(\frac{eS}{t_u}\right)^{-t_ua}$$
By Markov's inequality and a union bound, we have both $\Gamma(I')\subseteq I^C$ and error probability at most $2e^{a}\cdot (eS/t_u)^{t_ua}\cdot\eps_{I(A)=I'}$ with probability at least $e^{-a}\cdot (eS/t_u)^{-t_ua}/2 \eqdef p$ over the choice of $\Gamma$. Thus if we pick $k = (1/p)a\log(en/a) > (1/p)\cdot \log\binom{n}{a}$ and use that $1+x\le e^x$ for all real $x$, then setting the probability that all permutations $\Gamma$ chosen fail to have the desired failure probability and $\Gamma(I')\subseteq I^C$ is at most $(1-p)^k < 1/\binom{n}{a}$. Thus by a union bound over all $\binom{n}{a}$ size-$a$ subsets of indices, we can conclude that the desired set of
permutations exists.
\end{proofof}

\section{Implications}
\label{sec:implications}

In this section, we present the (almost) immediate implications of Theorem~\ref{thm:main}. All these result follow by re-using the one-way communication game lower bounds originally proved to obtain space lower bounds of heavy hitters. Consider the generic protocol in Figure~\ref{fig:decode}. For different streaming problems, we will show the different ways to implement the function $Check(t,j)$ using the updates and queries of the problems, where $Check(t,j)$ is supposed to be able to tell if $t$ is equal to $i_j$ with failure probability $\delta$. We show first that if this is the case, we obtain a lower bound on the update time $t_u$. The lower bounds then follow from the implementation of $Check(t,j)$.

\begin{figure}[ht]
        \begin{algorithmic}[1]
	\State Alice chooses $a$ indices $i_1,\ldots,i_a$ randomly without replacement in $[n]$.
	\State Alice performs updates $v[i_j] \leftarrow v[i_j]+C^j$ for $j=1,\ldots,a$, where $C$ is a large constant.
	\State Alice sends her memory state to Bob.
	\For{$j$ from $a$ down to 1}
		\Comment Bob decodes $i_a, \ldots, i_1$ from Alice's message.	
		\ForAll{$t\in [n]$}
			\If{Check(t,j)}
				\State Bob declares $i_j = t$.
				\State Bob performs the update $v[i_j] \gets v[i_j] - C^j$	 
			\EndIf
		\EndFor
           \EndFor
        \end{algorithmic}
\caption{\label{fig:decode}The communication protocol for Alice to send $a$ random numbers in $[n]$ to Bob.
}
\end{figure}

\begin{theorem}\label{thm:randomized-bound}
Any randomized non-adaptive streaming algorithm that can implement all the $Check(t,j)$ calls using no more than $k \leq n^{O(1)}$ queries with failure probability $n^{-\Omega(1)} \le \delta \leq 1/2-\Omega(1)$ each, must have worst case update time $t_u =\Omega\left(\min\left\{\sqrt{\frac{\lg 1/\delta}{\lg(eS/t_u)}}, \frac{\lg 1/\delta}{\sqrt{\lg k \lg(eS/t_u)}}\right\}\right)$.
\end{theorem}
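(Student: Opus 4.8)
The plan is to feed the decoding protocol of Figure~\ref{fig:decode} through the compressed one-way communication game of Theorem~\ref{thm:main}. In that game Alice's input is the ordered tuple $(i_1,\dots,i_a)$ of distinct indices she picks in Step~1 (she then performs the weighted updates $v[i_j]\gets v[i_j]+C^j$), while Bob's input is empty and his only task is to recover $(i_1,\dots,i_a)$ from Alice's message. The number of such ordered tuples is $N=n(n-1)\cdots(n-a+1)\ge(n/2)^a$ for $a\le\sqrt n$, so $\lg N\ge a(\lg n-1)$, and a standard argument (fix the protocol's private coins to their best value against the uniform distribution on the $N$ tuples; Bob's output is then a deterministic function of Alice's message, so if he is correct on a $2/3$ fraction of tuples then Alice's message takes at least $2N/3$ distinct values) shows that any protocol in which Bob decodes correctly with probability $\ge 2/3$ forces Alice to send at least $\lg N-1\ge a(\lg n-1)-1$ bits. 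I will contradict this by showing that whenever $t_u$ is below the claimed bound, Theorem~\ref{thm:main} compresses Alice's message to $o(a\lg n)$ bits while keeping the decoding error below $1/3$.

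The first hurdle is the union bound over the $k$ queries: after compression each query errs with probability $2e^{a}(eS/t_u)^{t_u a}\delta$, so a naive union bound over $k$ queries is useless unless $\delta$ is tiny. I would first run $r$ independent copies of the streaming algorithm and answer each $Check$ query by a standard majority/median vote. This is still non-adaptive, still implements $Check$ with $k$ queries, has space $rS$ and update time $rt_u$, and---since $\delta\le 1/2-\Omega(1)$---per-query error $\delta_r$ with $\lg(1/\delta_r)=\Omega(r\lg(1/\delta))$. The crucial point is that the ratio $rS/(rt_u)=S/t_u$, hence the quantity $\beta:=\lg(eS/t_u)$ appearing in Theorem~\ref{thm:main}, is unchanged. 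I would take $r=\Theta(\max\{1,\lg k/\lg(1/\delta)\})$, just large enough that $\lg(1/\delta_r)$ beats $\lg k$ by a large constant factor.

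Now apply Theorem~\ref{thm:main} to the $r$-fold algorithm with parameter $a=\Theta(rt_u)$ (a suitably large constant times its update time $rt_u$). Provided $rt_u\beta\le\tfrac12\lg n$ (the hypothesis of Theorem~\ref{thm:main}) and $a\le\sqrt n$, Alice's message has length at most $a\lg e+rt_u a\beta+\lg a+\lg\lg(en/a)+2rt_uw+1$, which is $O(rt_ua\beta+rt_uw)=O(a\lg n)$ once $rt_u\beta\le\tfrac12\lg n$ and $w=\Theta(\lg n)$; taking the constant in $a=\Theta(rt_u)$ large enough makes this strictly less than the information lower bound $a(\lg n-1)-1$ (this is exactly where $a=\Omega(rt_u)$ is needed, to absorb the unavoidable $2rt_uw$ term coming from transmitting the contents of the $t_u$ cells). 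Each of Bob's $k$ queries errs with probability at most $\delta_r':=2e^{a}(eS/t_u)^{rt_ua}\delta_r$, so Bob decodes correctly with probability $\ge 2/3$ as soon as $k\delta_r'\le 1/3$, i.e. $a\lg e+rt_ua\beta\le\lg(1/\delta_r)-\lg k-O(1)$. Substituting $a=\Theta(rt_u)$ and $\lg(1/\delta_r)=\Omega(r\lg(1/\delta))$, this reduces to $rt_u^2\beta=O(\lg(1/\delta))$ together with $r=\Omega(\lg k/\lg(1/\delta))$: when $\lg k\le\lg(1/\delta)$ take $r=1$ and get $t_u=O(\sqrt{\lg(1/\delta)/\beta})$; when $\lg k>\lg(1/\delta)$ take $r=\Theta(\lg k/\lg(1/\delta))$ and get $t_u^2\beta=O(\lg^2(1/\delta)/\lg k)$, i.e. $t_u=O(\lg(1/\delta)/\sqrt{\lg k\cdot\beta})$. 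Contrapositively, if $t_u$ is below a small enough constant times $\min\{\sqrt{\lg(1/\delta)/\beta},\ \lg(1/\delta)/\sqrt{\lg k\cdot\beta}\}$ then admissible $r$ and $a$ exist, the side conditions $rt_u\beta\le\tfrac12\lg n$ and $a\le\sqrt n$ hold (using $\lg k,\lg(1/\delta),\beta=O(\lg n)$ and the slack in the small constant), and we reach a contradiction; the cases $t_u\ge\sqrt n$ or $\lg(1/\delta)=O(1)$ are trivial since the claimed bound is then $O(\sqrt{\lg n})$ or $O(1)$.

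The main obstacle is this simultaneous choice of $r$ and $a$: the message-length contradiction wants $a$ (hence $rt_u$) as large as possible, whereas the post-compression union bound wants the error-blowup exponent $rt_u a$ as small as possible, and balancing the two is precisely what produces the two regimes and the $\min$ in the bound. Chasing the constants through---choosing $a$ and $r$, verifying $rt_u\beta\le\tfrac12\lg n$ and $a\le\sqrt n$, and checking that the unspecified constant in the target $\Omega(\cdot)$ can absorb the $O(\lg n)$-type slack---is where the real bookkeeping lies; the information-theoretic step that Bob needs $\Omega(a\lg n)$ bits is routine by comparison, though it must be stated with care since it has to hold for private-coin protocols in which Bob's queries may be adaptive.
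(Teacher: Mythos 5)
Your proposal is correct and follows essentially the same approach as the paper: apply Theorem~\ref{thm:main} to the protocol of Figure~\ref{fig:decode} with $a=\Theta(t_u)$ and contradict an $\Omega(a\lg n)$ one-way communication lower bound, handling large $\delta$ by first amplifying with independent copies to preserve the ratio $S/t_u$. The only cosmetic differences are that the paper splits into two explicit regimes (first proving the $\delta\le k^{-3}$ case directly, then reducing to it via $\alpha$-fold amplification) whereas you parameterize over $r$ from the outset, and the paper invokes Fano where you use an equivalent counting argument.
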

\begin{proof}
We first prove the lower bound for the case where $\sqrt{\frac{\lg 1/\delta}{\lg(eS/t_u)}}$ is the smaller of the two terms. This is the case at least from $n^{-\Omega(1)} \leq \delta \leq k^{-3}$. Assume for contradiction that such a randomized non-adaptive algorithm exists with $t_u=o(\sqrt{\lg(1/\delta)/ \lg(eS/t_u)})$. Set $a=c_0 t_u$ for a sufficiently large constant $c_0$. We invoke Theorem~\ref{thm:main} to conclude that Alice can send her memory state to Bob using $a\lg e + t_ua\lg(eS/t_u) + \lg a + \lg\lg (en/a) + 2t_u w + 1$ bits while increasing the failure probability of Bob's $k$ queries to
$$2e^{a}\cdot (eS/t_u)^{t_u a} \delta \leq (eS/t_u)^{c_0 t_u^2} \delta^{1-o(1)} \leq \delta^{1-o(1)}\leq k^{-2}$$
each. By a union bound, the answer to all Bob's queries are correct with probability at least $1-1/k \geq 9/10$, so Bob can recover $i_1,\ldots,i_a$ with probability $9/10$. By Fano's inequality, Alice must send Bob at least $\Omega(H(i_1,\ldots, i_a)) = \Omega(a\log(n/a)) \geq c_0c_1 t_u \lg n$ bits for some constant $c_1$ (where $c_1$ does not depend on $c_0$). But the size of her message was
$$
a\lg e + t_ua\lg(eS/t_u) + \lg a + \lg\lg (en/a) + 2t_u w + 1 \le c_0 t_u^2 \lg(eS/t_u) + 2t_uw + o(t_u \lg n) .
$$
We assumed $t_u = o\left(\sqrt{\lg(1/\delta)/\lg(eS/t_u)}\right) = o\left(\sqrt{\lg n/\lg(eS/t_u)}\right)$ and thus the above is bounded by $2t_uw + o(t_u \lg n)$. Setting $c_0$ high enough, we get that $2w \ge (c_0c_1\lg n)/2$ and thus we have reached a contradiction.

For the case $k^{-3} < \delta < 1/2-\Omega(1)$, observe that we can decrease the failure probability to $k^{-3}$ by increasing the space, update time and query time by a factor $\alpha=\Theta(\lg_{1/\delta} k)$: simply run $\alpha$ independent copies of the streaming algorithm and return the majority answer on a query. Hence the lower bound becomes 
$$\Omega\left(\frac{\sqrt{\frac{\lg k}{\lg(eS\alpha/(t_u \alpha))}}}{\alpha}\right) = \Omega\left(\frac{\lg 1/\delta}{\sqrt{\lg k \lg(eS/t_u)}}\right).$$
\end{proof}

\begin{theorem}\label{thm:det-bound}
Any deterministic non-adaptive streaming algorithm that can implement $Check(t,j)$ must have worst case update time 
$$t_u =\Omega\left(\frac{\log n}{\log(eS/t_u)}\right) .$$
\end{theorem}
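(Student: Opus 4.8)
The plan is to run the argument behind Theorem~\ref{thm:randomized-bound} in the case where the streaming algorithm is deterministic, so that the per-query failure probability is $\delta=0$. As there, I would use the protocol of Figure~\ref{fig:decode} with $a=c_0 t_u$ planted coordinates for a large constant $c_0$, and assume for contradiction that $t_u\le c\,\lg n/\lg(eS/t_u)$ for a small constant $c$. This gives $t_u\le(1/2)(\lg n/\lg(eS/t_u))$ (validating the hypothesis of Theorem~\ref{thm:main}) and, since $\lg(eS/t_u)\ge\lg e$, also $t_u\le(c/\lg e)\lg n=O(\lg n)$, so the planted increments $C^j$ with $j\le a=O(\lg n)$ stay $\poly(n)$ as the model requires and $a\le\sqrt n$ for large $n$. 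Invoking Theorem~\ref{thm:main} (which here yields a deterministic protocol, the streaming algorithm being deterministic), Alice can replace her memory image by a message of at most $a\lg e+t_u a\lg(eS/t_u)+\lg a+\lg\lg(en/a)+2t_u w+1$ bits, after which each of Bob's $k\le n^{O(1)}$ $Check$-queries fails with probability at most $2e^{a}(eS/t_u)^{t_u a}\cdot 0=0$. Hence Bob decodes $i_1,\dots,i_a$ with certainty — the value of $k$ is now immaterial — so Alice's message must determine her $a=c_0 t_u$ randomly chosen distinct coordinates; by a counting argument (or Fano's inequality, as in the proof of Theorem~\ref{thm:randomized-bound}) its length is at least $\lg\binom{n}{c_0 t_u}$, which is $\Omega(c_0 t_u\lg n)$ since $c_0 t_u=O(\lg n)$ makes $\lg(n/(c_0 t_u))=(1-o(1))\lg n$.

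It remains to see that the two bounds are incompatible once $c_0$ is large enough. With $a=c_0 t_u$ and the hypothesis $t_u\lg(eS/t_u)\le c\lg n$, the term $t_u a\lg(eS/t_u)=c_0 t_u\cdot t_u\lg(eS/t_u)\le c_0 c\,t_u\lg n$, while $a\lg e+\lg a+\lg\lg(en/a)+1=o(t_u\lg n)$, so the message length is at most $c_0 c\,t_u\lg n+2t_u w+o(t_u\lg n)$. Combined with the lower bound $\Omega(c_0 t_u\lg n)$ and divided through by $t_u\lg n$, this forces $c_0 c_1\le c_0 c+2w/\lg n+o(1)$ for an absolute constant $c_1>0$ independent of $c_0$. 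Since $w=\Theta(\lg n)$, picking $c<c_1/2$ and then $c_0$ large enough that $c_0 c_1/2>2w/\lg n+1$ makes this false for all large $n$, the desired contradiction; hence $t_u=\Omega(\lg n/\lg(eS/t_u))$.

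I do not expect a genuine obstacle: this is the $\delta=0$ instance of the previous proof and is in fact shorter, because the error-blow-up factor $2e^{a}(eS/t_u)^{t_u a}$ multiplies zero and so imposes no constraint — which is exactly why the deterministic bound is quadratically larger than what substituting $\delta=n^{-\Theta(1)}$ into Theorem~\ref{thm:randomized-bound} would yield. The two points to verify with care are: that $a=c_0 t_u=O(\lg n)$ in the regime being contradicted, so that the planted increments respect the model's $\poly(n)$ bound and $a\le\sqrt n$ as Theorem~\ref{thm:main} demands (this follows from $t_u\le(c/\lg e)\lg n$); and that Theorem~\ref{thm:main} with $\delta=0$ produces a protocol correct on every relevant input, not merely in expectation — which holds because its proof builds a deterministic protocol whose error equals $2e^{a}(eS/t_u)^{t_u a}\cdot\eps_{I(A)=I'}$ for each set $I'$ of Alice's coordinates, and every $\eps_{I(A)=I'}$ is $0$ for a deterministic streaming algorithm.
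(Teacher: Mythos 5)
Your proposal is correct and follows essentially the same route as the paper: apply Theorem~\ref{thm:main} with $\delta=0$ so the error blow-up is moot, bound Alice's message length from above by the compression theorem and from below by the entropy/counting argument on her $a=c_0 t_u$ planted indices, and derive a contradiction by choosing $c_0$ large relative to $w=\Theta(\lg n)$. Your version is somewhat more explicit than the paper's (replacing the $o(\cdot)$ contradiction hypothesis with a concrete small constant $c$, and verifying the side conditions $a\le\sqrt n$, $t_u\le\tfrac12\lg n/\lg(eS/t_u)$, and $C^j=\poly(n)$), but these are presentational refinements of the same argument rather than a different proof.
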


\begin{proof}
The proof is similar to that of Theorem~\ref{thm:randomized-bound}. Assume for contradiction there is a deterministic non-adaptive streaming algorithm with $t_u = o(\lg n/\lg(eS/t_u))$. Choose $a=c_0 t_u$ for sufficiently large constant $c_0$. By Theorem~\ref{thm:main}, Alice can send her memory state to Bob using 
\begin{align*}
a\lg e + t_ua\lg(eS/t_u) + \lg a + \lg\lg (en/a) + 2t_u w + 1 &\le c_0 t_u^2 \lg(eS/t_u) + 2t_uw + o(t_u \lg n)\\
{} &\le 2t_u w + o(t_u \lg n)
\end{align*}
bits and Bob can still answer every query correctly. Since Bob can recover $i_1, \ldots, i_a$, Alice must send Bob at least $H(i_1,\ldots, i_a) = \Omega(a\log(n/a)) = c_0c_1 t_u \log n$ for some constant $c_1$ (independent of $c_0$). Setting $c_0$ large enough gives $2w \ge (c_0 c_1 \lg n)/2$, thus deriving the contradiction.
\end{proof}

\begin{corollary}
Any randomized non-adaptive streaming algorithm for point query, $\ell_p$ estimation with $1\le p\le 2$, and entropy estimation with failure probability $n^{-\Omega(1)} \le \delta \le 1/2-\Omega(1)$ must have worst case update time $$t_u = \Omega\left(\frac{\log (1 /\delta)}{\sqrt{\lg n \log(eS/t_u)}}\right) .$$
Any deterministic non-adaptive streaming algorithm for the same problems must have worst case update time $t_u = \Omega(\log n / \log(eS/t_u))$.
\end{corollary}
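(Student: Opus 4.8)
The plan is to deduce the corollary directly from Theorems~\ref{thm:randomized-bound} and~\ref{thm:det-bound}: both the randomized bound (using $k=\poly(n)$, so that $\lg k=\Theta(\lg n)$ and the second branch of the minimum is the one that fires for $\delta\ge n^{-\Omega(1)}$) and the deterministic bound are \emph{verbatim} the conclusions of those two theorems. Hence the only work is to show that, for each of the three problems, a non-adaptive streaming algorithm solving it can implement the predicate $Check(t,j)$ of Figure~\ref{fig:decode} using $\poly(n)$ of its own queries, each failing with probability at most $\delta$; since point query, $\ell_p/\ell_q$-estimation and entropy-estimation are all permutation invariant (as already noted), the two theorems then apply. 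Throughout I would set the constant $C$ in Figure~\ref{fig:decode} to a large absolute constant.

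The one structural fact driving all three implementations: if Bob has correctly decoded $i_a,\dots,i_{j+1}$ and undone their updates, the current vector is $v=\sum_{l=1}^{j}C^l e_{i_l}$, so $v[i_j]=C^j$, every other coordinate has magnitude at most $C^{j-1}$, the part of $v$ away from $i_j$ has $\ell_1$-mass below $C^j/(C-1)$, and Bob knows $\|v\|_1=\sum_{l=1}^{j}C^l$ exactly even without knowing the positions. For $C$ large, $i_j$ thus dominates $v$ in every reasonable sense, and each $Check$ either reads $i_j$ off a single query or runs a short ``perturb-then-measure'' routine that creates a large gap in the queried statistic according to whether $t=i_j$.

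For $\ell_1$ heavy hitters, $C^j$ exceeds $\|v\|_\infty-\|v\|_1/2$ while every other coordinate lies far below that threshold, so $i_j$ is the \emph{unique} valid answer; $Check(t,j)$ just issues the heavy-hitter query and returns true iff the answer is $t$ (in fact one query per level already decodes everything, so the loop over $t$ is unnecessary). For $\ell_p/\ell_q$-estimation ($1\le q\le p\le 2$), $Check(t,j)$ performs the tentative update $v[t]\leftarrow v[t]-C^j$, asks for $\|\cdot\|_p$, and then undoes the update: if $t=i_j$ the residual is the tiny tail vector and the estimate is $O(C^{j-1})$; if $t\ne i_j$ the residual has two large coordinates — the untouched $C^j$ at $i_j$ and the new $\approx-C^j$ at $t$ — so $\|\cdot\|_p\ge 2^{1/p}C^j(1-o(1))$ while the additive slack $\|\cdot\|_q/2\le\|\cdot\|_1/2\le(1+o(1))C^j$, and since $2^{1/p}\ge\sqrt2>1$ the estimate is then $\Omega(C^j)$; for $C$ large these regimes are separated, so $Check$ thresholds the estimate (the same works for $2$-approximating $\|v\|_p$). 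For entropy-estimation the vector must stay non-negative, so $Check(t,j)$ instead tentatively \emph{adds} $C^j$ to coordinate $t$, asks for a $2$-approximation of the entropy, and undoes the update: if $t=i_j$ then $i_j$ carries mass $2C^j$ and still dominates, so the entropy is $\Theta((\log C)/C)=o(1)$; if $t\ne i_j$ then coordinates $i_j$ and $t$ both carry mass $\Theta(C^j)$ with everything else of lower order, so the entropy is $1\pm o(1)$; a $2$-approximation separates these, so again $Check$ thresholds. Point query follows the same perturb-then-measure template: after the tentative subtraction $v[t]\leftarrow v[t]-C^j$ the value at $t$ equals $0$ with a tiny $\ell_1$-error bar exactly when $t=i_j$, and one builds a reliable one-bit test from a constant number of such probes (with slightly shifted subtraction amounts).

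The step I expect to be the main obstacle is exactly the robustness of these $Check$ routines: the approximation guarantees are only constant- (or $2\times$-) factor, so a single query has additive/multiplicative error of the same order as the heavy coordinate $C^j$, and a naive single probe does not certify whether the spike was removed. The fix is the geometric coefficients with $C$ a large constant, which makes $i_j$ dominant, together with choosing a statistic whose outcome jumps by a $1-o(1)$ factor (entropy), changes scale by a factor $\Omega(C)$ ($\ell_p$), or collapses to $0$ with a small error bar (point query); I expect point query to be the most delicate, since its primitive returns a single coordinate's value and carries the largest relative error among these problems. Everything else is bookkeeping of the kind already performed in the proofs of Theorems~\ref{thm:randomized-bound} and~\ref{thm:det-bound}: $Check$ uses $O(1)$ queries per pair $(t,j)$, so Figure~\ref{fig:decode} makes $O(na)=\poly(n)$ queries, each failing with probability $\delta$; $a=\Theta(t_u)\le\sqrt n$ in the regime of interest; and for $\delta$ above $k^{-3}$ one first drives the failure probability down to $k^{-3}$ by running $\Theta(\log_{1/\delta}k)$ independent copies and taking the majority, exactly as in the proof of Theorem~\ref{thm:randomized-bound}, which is what produces the $\log(1/\delta)$ factor (rather than its square root) in the randomized bound.
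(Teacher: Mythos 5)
Your proposal follows exactly the paper's route: invoke the generic protocol of Figure~\ref{fig:decode}, show how each query primitive implements $Check(t,j)$ with $\poly(n)$ queries, and then plug into Theorems~\ref{thm:randomized-bound} and~\ref{thm:det-bound}. The $\ell_p/\ell_q$ implementation (tentatively subtract $C^j$ from $v[t]$, query the norm, undo, threshold around $C^j/3$) is identical to the paper's. For entropy you \emph{add} $C^j$ rather than subtract it, motivated by keeping $v$ non-negative; the paper instead subtracts $C^j$ and argues via the resulting binary-entropy lower bound $H(1/3)>0.9$ when $t\ne i_j$. Both variants are within a constant-factor separation and equally serviceable, so this is cosmetic rather than a different method.

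The one place where you diverge and also leave a real gap is point query. The paper's $Check(t,j)$ for point query is deliberately the simplest of the three: it issues the query $v[t]$ \emph{without any perturbation} and declares $t=i_j$ if the returned value exceeds $C^j/3$. You instead propose a ``perturb-then-measure'' routine that first subtracts $C^j$ from $v[t]$ and then tries to detect whether the spike was removed, conceding that ``one builds a reliable one-bit test from a constant number of such probes (with slightly shifted subtraction amounts)'' and flagging this yourself as the delicate part. As stated this is not a proof: after subtracting $C^j$ in the $t\ne i_j$ case the vector has two coordinates of magnitude $\approx C^j$ and $\|v\|_1\approx 2C^j$, so a single point query at $t$ returns a value whose uncertainty interval overlaps the small-$\|v\|_1$ interval from the $t=i_j$ case near $0$, and you never specify what the ``slightly shifted subtraction amounts'' are or why a constant number of them yields a correct one-bit test. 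You should either supply that construction explicitly or, better, drop the perturbation entirely and use the paper's direct single-query test, which avoids the issue.

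One small bookkeeping correction: you say the second branch of the minimum in Theorem~\ref{thm:randomized-bound} ``is the one that fires'' for $\delta\ge n^{-\Omega(1)}$. In fact the two branches cross at $\delta=1/n$; for $\delta=n^{-c}$ with $c>1$ constant the first branch is the smaller one, but it differs from the second by only a $\sqrt{c}=O(1)$ factor, so the stated $\Omega\bigl(\tfrac{\log(1/\delta)}{\sqrt{\lg n\,\lg(eS/t_u)}}\bigr)$ bound still follows. Worth stating carefully if you write this out.
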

\begin{proof}
We just need to show implementations of $Check(t, j)$ that use $n^{O(1)}$ queries in total and apply Theorem~\ref{thm:randomized-bound} to get the stated bound for randomized algorithms. The stated bound for deterministic algorithms follows from applying Theorem~\ref{thm:det-bound} instead.

\subsection{Applications to specific problems}

\paragraph{Point query.} We can implement each $Check(t,j)$ by simply querying for the value of $v[t]$ and check if the returned value is at least $C^j / 3$. By the guarantee of the algorithm, if it does not fail, the returned value is within a factor $3$ from the right value and thus, $Check(t,j)$ can correctly tell if $t=v_j$. Thus we run a total of $n a = n^{O(1)}$ queries to implement all $Check(t,j)$'s.

\paragraph{$\ell_p/\ell_q$ estimation.} We can implement $Check(t,j)$ as follows.
\begin{itemize}
\item $v[t] \gets v[t] - C^j$
\item Check if $\|v\|_p$ is at most $C^j / 3$
\item $v[t] \gets v[t] + C^j$
\end{itemize}
By the guarantee of the algorithm, if it does not fail, the estimated $\ell_p / \ell_q$ norm is at most $3C^{j-1} < C^j / 3$ if $t=i_j$ and it is greater than $C^j/3$ if $t\ne i_j$. Thus, $Check(t,j)$ can correctly tell if $t=v_j$. Again we used $n^{O(1)}$ queries in total.

\paragraph{Entropy estimation.} We can implement $Check(t,j)$ as follows.
\begin{itemize}
\item $v[t] \gets v[t] - C^j$
\item Check if the entropy is at most 1/3
\item $v[t] \gets v[t] + C^j$
\end{itemize}
Consider the case the algorithm does not fail. First, if $t=i_j$ then the entropy is at most
\begin{align*}
\sum_{i=1}^j \frac{C^i (C-1)}{C^{j+1}-C}\log \frac{C^{j+1}-C}{C^i (C-1)} &\le \sum_{i=1}^{j-1} \frac{C^i (C-1)}{C^{j+1}-C}\log 2C^{j-i} + \frac{C^j (C-1)}{C^{j+1}-C}\log \frac{C^{j+1}-C}{C^j (C-1)}\\
&\le O(\frac{\log C}{C-1}) + \frac{C^j (C-1)}{C^{j+1}-C} \cdot \frac{C^j-C}{C^j(C-1)} \le O(\frac{\log C}{C-1}) < 1/10
\end{align*}
On the other hand, if $t\ne i_j$ then after the first operation, $\|v\|_1 \le \frac{2C^{j+1}-C^j-C}{C-1} < 3 C^j$ so the entropy is at least the binary entropy of whether the index $i_j$ is picked, which is greater than $H(1/3) > 0.9$. Thus, by the guarantee of the algorithm, $Check(t,j)$ correctly tells if $t=v_j$ using $n^{O(1)}$ queries.
\end{proof}

\begin{corollary}
Any randomized non-adaptive streaming algorithm for $\ell_1$ heavy hitters with failure probability $n^{-O(1)} \le \delta \le 1/2-\Omega(1)$ must have worst case update time $$t_u = \Omega\left(\min\left\{\sqrt{\frac{\log (1/\delta)}{\log(eS/t_u)}} , \frac{ \log (1/\delta)}{\sqrt{\lg t_u \cdot \lg(eS/t_u)}}\right\}\right).$$ Any deterministic non-adaptive streaming algorithm for the same problems must have worst case update time $t_u = \Omega\left(\frac{\log n}{\log(eS/t_u)}\right)$.
\end{corollary}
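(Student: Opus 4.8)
The plan is to reuse the generic protocol of \Figure{decode}, just as for point query and the norm/entropy problems, but to exploit the fact that an $\ell_1$ heavy‑hitters query is \emph{self‑decoding}: it directly returns an index, so Bob spends only one query per level of the encoding rather than one per candidate $t\in[n]$. This is exactly what produces a $\lg t_u$ in place of a $\lg n$ in the bound: when we apply \Theorem{randomized-bound} the number of queries will be $k=a$, and since that theorem sets $a=\Theta(t_u)$ we get $\lg k=\Theta(\lg t_u)$.

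Concretely, recall that in \Figure{decode} Alice performs the updates $v[i_j]\gets v[i_j]+C^j$ for $j=1,\dots,a$ with the $i_j$ distinct and $C$ a large constant. After Bob has decoded $i_a,\dots,i_{j+1}$ and undone their updates, the current vector has $v[i_\ell]=C^\ell$ for $\ell\le j$ and is zero elsewhere, so $\|v\|_\infty=C^j$ (attained only at $i_j$) and $\|v\|_1=\sum_{\ell=1}^{j}C^\ell=(C^{j+1}-C)/(C-1)$. The key point is that $i_j$ is the \emph{unique} index meeting the heavy‑hitter promise $|v_i|\ge\|v\|_\infty-\|v\|_1/2$: it suffices to check that the next‑largest coordinate satisfies $C^{j-1}<C^j-\|v\|_1/2$, which reduces to $\|v\|_1<2C^{j-1}(C-1)$, i.e. to $C^{j+1}/(C-1)\le 2C^{j-1}(C-1)$, i.e. to $C^2\le 2(C-1)^2$, true for every constant $C\ge 4$ (and this same range makes $\|v\|_\infty-\|v\|_1/2>0$, so zero coordinates are never valid answers). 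Hence, whenever the streaming algorithm does not err, the level‑$j$ heavy‑hitters query must return exactly $i_j$; Bob reads it off, subtracts $C^j$ from that coordinate, and moves to level $j-1$. Thus $Check(t,j)$ is implemented as ``$t$ equals the index returned by the level‑$j$ heavy‑hitters query,'' and the whole protocol uses only $a$ queries, each of failure probability $\delta$.

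The randomized lower bound now follows by invoking \Theorem{randomized-bound} with $k=a$. Tracing its proof, $a$ is set to $c_0t_u$, so $k=\Theta(t_u)\le\poly(n)$ and $\lg k=\Theta(\lg t_u)$, and the conclusion becomes
$$t_u=\Omega\!\left(\min\left\{\sqrt{\frac{\lg(1/\delta)}{\lg(eS/t_u)}},\ \frac{\lg(1/\delta)}{\sqrt{\lg t_u\cdot\lg(eS/t_u)}}\right\}\right),$$
as claimed; the mild restriction $\delta\le k^{-3}$ for the first regime and the standard ``$\alpha=\Theta(\lg_{1/\delta}k)$ independent copies'' reduction for the second go through exactly as written there, now with $k=a=\Theta(t_u)$. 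For the deterministic statement we instead apply \Theorem{det-bound} with the same implementation of $Check$, giving $t_u=\Omega(\lg n/\lg(eS/t_u))$.

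The one place that needs genuine care is the $\ell_1/\ell_\infty$ separation in the second paragraph: we must choose the constant $C$ large enough that, after each round of peeling, the residual vector keeps its $\ell_\infty$ mass strictly above $\|v\|_1/2$ so that the heavy‑hitters promise pins down a single index — while keeping $C=O(1)$ so that the largest magnitude $C^a\le C^{O(t_u)}=C^{O(\lg n)}$ stays $\poly(n)$ and the updates remain legal in the turnstile model (here $a=\Theta(t_u)=O(\lg n)$). Everything else is a direct transcription of the point‑query argument, with the single structural change that the heavy‑hitters query decodes itself and therefore costs only one query per level, which is the source of the $\lg t_u$ factor.
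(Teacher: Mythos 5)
Your proof is correct and follows essentially the same route as the paper's: use the generic protocol of Figure~\ref{fig:decode}, replace the $\Theta(n)$ calls to $Check(\cdot,j)$ per level with a single heavy-hitters query that directly returns $i_j$, observe that this makes $k=a=\Theta(t_u)$ so $\lg k=\Theta(\lg t_u)$ in Theorem~\ref{thm:randomized-bound}, and invoke Theorem~\ref{thm:det-bound} for the deterministic version. Your uniqueness argument (checking $C^{j-1}<C^j-\|v\|_1/2$, leading to $C\ge 4$) is actually spelled out more carefully than the paper's one-line remark that $\|v\|_1<2v[i_j]$, and your aside on keeping $C^a\le\mathrm{poly}(n)$ is a small extra sanity check the paper leaves implicit; otherwise the arguments coincide.
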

\begin{proof}
We use a slightly different decoding procedure for Bob. Instead of running $Check(t,j)$ all indices $t\in[n]$, in iteration $j$, Bob can simply query for the heavy hitter to find $i_j$. Note that in iteration $j$, we have $\|v\|_1 = \frac{C^{j+1}-C}{C-1} < 2C^j = 2v[i_j]$ so if the algorithm is correct, Bob will find the index $i_j$. We have thus implemented all the $Check(t,j)$'s using only $a$ queries. Recall from the proof of Theorem~\ref{thm:randomized-bound} that $a=O(t_u)$.
\end{proof}

\bibliographystyle{alpha}

\bibliography{biblio}

\begin{thebibliography}{BYJKS04}

\bibitem[AMS99]{AMS99}
Noga Alon, Yossi Matias, and Mario Szegedy.
\newblock The space complexity of approximating the frequency moments.
\newblock {\em J. Comput. Syst. Sci.}, 58(1):137--147, 1999.

\bibitem[BYJKS04]{BJKS04}
Ziv Bar-Yossef, T.~S. Jayram, Ravi Kumar, and D.~Sivakumar.
\newblock An information statistics approach to data stream and communication
  complexity.
\newblock {\em J. Comput. Syst. Sci.}, 68(4):702--732, 2004.

\bibitem[CCFC04]{CCF04}
Moses Charikar, Kevin Chen, and Martin Farach-Colton.
\newblock Finding frequent items in data streams.
\newblock {\em Theor. Comput. Sci.}, 312(1):3--15, 2004.

\bibitem[CCM10]{ChakrabartiCM10}
Amit Chakrabarti, Graham Cormode, and Andrew McGregor.
\newblock A near-optimal algorithm for estimating the entropy of a stream.
\newblock {\em ACM Transactions on Algorithms}, 6(3), 2010.

\bibitem[CJ11]{CJ11}
Rapha{\"e}l Clifford and Markus Jalsenius.
\newblock Lower bounds for online integer multiplication and convolution in the
  cell-probe model.
\newblock In {\em Proceedings of the 38th International Colloquium on Automata,
  Languages and Programming (ICALP)}, pages 593--604, 2011.

\bibitem[CJS13]{CJS13}
Rapha{\"e}l Clifford, Markus Jalsenius, and Benjamin Sach.
\newblock Tight cell-probe bounds for online hamming distance computation.
\newblock In {\em Proceedings of the 24th Annual ACM-SIAM Symposium on Discrete
  Algorithms (SODA)}, pages 664--674, 2013.

\bibitem[CKS03]{CKS03}
Amit Chakrabarti, Subhash Khot, and Xiaodong Sun.
\newblock Near-optimal lower bounds on the multi-party communication complexity
  of set disjointness.
\newblock In {\em IEEE Conference on Computational Complexity}, pages 107--117,
  2003.

\bibitem[CM05]{CM05}
Graham Cormode and S.~Muthukrishnan.
\newblock An improved data stream summary: the count-min sketch and its
  applications.
\newblock {\em J. Algorithms}, 55(1):58--75, 2005.

\bibitem[EVF06]{EstanVF06}
Cristian Estan, George Varghese, and Michael~E. Fisk.
\newblock Bitmap algorithms for counting active flows on high-speed links.
\newblock {\em IEEE/ACM Trans. Netw.}, 14(5):925--937, 2006.

\bibitem[FM85]{FlajoletM85}
Philippe Flajolet and G.~Nigel Martin.
\newblock Probabilistic counting algorithms for data base applications.
\newblock {\em J. Comput. Syst. Sci.}, 31(2):182--209, 1985.

\bibitem[Fre78]{Fredman78}
Michael~L. Fredman.
\newblock Observations on the complexity of generating quasi-{Gray} codes.
\newblock {\em SIAM J. Comput.}, 7:134--146, 1978.

\bibitem[FW93]{FredmanW93}
Michael~L. Fredman and Dan~E. Willard.
\newblock Surpassing the information theoretic bound with fusion trees.
\newblock {\em J. Comput. Syst. Sci.}, 47(3):424--436, 1993.

\bibitem[FW94]{FredmanW94}
Michael~L. Fredman and Dan~E. Willard.
\newblock Trans-dichotomous algorithms for minimum spanning trees and shortest
  paths.
\newblock {\em J. Comput. Syst. Sci.}, 48(3):533--551, 1994.

\bibitem[GM03]{Gal03thecell}
Anna G\'{a}l and Peter~Bro Miltersen.
\newblock The cell probe complexity of succinct data structures.
\newblock In {\em Proceedings of the 30th International Colloquium on Automata,
  Languages and Programming}, pages 332--344, 2003.

\bibitem[GM07]{GM07}
Sumit Ganguly and Anirban Majumder.
\newblock {CR}-precis: A deterministic summary structure for update data
  streams.
\newblock In {\em ESCAPE}, pages 48--59, 2007.

\bibitem[Gooa]{GoogleTrends}
Google.
\newblock Google {Trends}.
\newblock \url{http://www.google.com/trends}.

\bibitem[Goob]{GoogleZeitgeist}
Google.
\newblock Google {Zeitgeist}.
\newblock \url{http://www.google.com/press/zeitgeist.html}.

\bibitem[Gro09]{Gronemeier09}
Andre Gronemeier.
\newblock Asymptotically optimal lower bounds on the {NIH}-multi-party
  information complexity of the {AND}-function and disjointness.
\newblock In {\em Proceedings of the 26th International Symposium on
  Theoretical Aspects of Computer Science (STACS)}, pages 505--516, 2009.

\bibitem[HNO08]{HNO08}
Nicholas J.~A. Harvey, Jelani Nelson, and Krzysztof Onak.
\newblock Sketching and streaming entropy via approximation theory.
\newblock In {\em Proceedings of the 49th Annual IEEE Symposium on Foundations
  of Computer Science (FOCS)}, pages 489--498, 2008.

\bibitem[Jay09]{Jayram09}
T.~S. Jayram.
\newblock Hellinger strikes back: A note on the multi-party information
  complexity of {AND}.
\newblock In {\em Proceedings of the 12th International Workshop on
  Randomization and Computation (RANDOM)}, pages 562--573, 2009.

\bibitem[JKS08]{JayramKS08}
T.~S. Jayram, Ravi Kumar, and D.~Sivakumar.
\newblock The one-way communication complexity of {Hamming} distance.
\newblock {\em Theory of Computing}, 4(1):129--135, 2008.

\bibitem[JST11]{JowhariST11}
Hossein Jowhari, Mert Saglam, and G{\'a}bor Tardos.
\newblock Tight bounds for {Lp} samplers, finding duplicates in streams, and
  related problems.
\newblock In {\em Proceedings of the 30th ACM SIGMOD-SIGACT-SIGART Symposium on
  Principles of Database Systems (PODS)}, pages 49--58, 2011.

\bibitem[JW13]{JW13}
T.~S. Jayram and David~P. Woodruff.
\newblock Optimal bounds for {Johnson}-{Lindenstrauss} transforms and streaming
  problems with subconstant error.
\newblock {\em ACM Transactions on Algorithms}, 9(3):26, 2013.

\bibitem[KNPW11]{KNPW11}
Daniel~M. Kane, Jelani Nelson, Ely Porat, and David~P. Woodruff.
\newblock Fast moment estimation in data streams in optimal space.
\newblock In {\em Proceedings of the 43rd ACM Symposium on Theory of Computing
  (STOC)}, pages 745--754, 2011.

\bibitem[KNW10a]{KNW10a}
Daniel~M. Kane, Jelani Nelson, and David~P. Woodruff.
\newblock On the exact space complexity of sketching and streaming small norms.
\newblock In {\em Proceedings of the 21st Annual ACM-SIAM Symposium on Discrete
  Algorithms (SODA)}, pages 1161--1178, 2010.

\bibitem[KNW10b]{KNW10b}
Daniel~M. Kane, Jelani Nelson, and David~P. Woodruff.
\newblock An optimal algorithm for the distinct elements problem.
\newblock In {\em Proceedings of the Twenty-Ninth ACM SIGMOD-SIGACT-SIGART
  Symposium on Principles of Database Systems (PODS)}, pages 41--52, 2010.

\bibitem[KSZC03]{KrishnamurthySZC03}
Balachander Krishnamurthy, Subhabrata Sen, Yin Zhang, and Yan Chen.
\newblock Sketch-based change detection: methods, evaluation, and applications.
\newblock In {\em Internet Measurement Comference}, pages 234--247, 2003.

\bibitem[Lar12]{Larsen12highercell}
Kasper~Green Larsen.
\newblock Higher cell probe lower bounds for evaluating polynomials.
\newblock In {\em Proceedings of the 53rd Annual IEEE Symposium on Foundations
  of Computer Science (FOCS)}, pages 293--301, 2012.

\bibitem[Lar13]{LarsenThesis}
Kasper~Green Larsen.
\newblock {\em Models and Techniques for Proving Data Structure Lower Bounds}.
\newblock PhD thesis, Aarhus University, 2013.

\bibitem[Moo01]{Moore01}
David Moore, 2001.
\newblock \url{http://www.caida.org/research/security/code-red/}.

\bibitem[MP69]{MP69}
Marvin Minsky and Seymour Papert.
\newblock {\em Perceptrons}.
\newblock MIT Press, 1969.

\bibitem[Mut05]{Muthukrishnan05}
S.~Muthukrishnan.
\newblock Data streams: Algorithms and applications.
\newblock {\em Foundations and Trends in Theoretical Computer Science}, 1(2),
  2005.

\bibitem[NNW14]{NNW14}
Jelani Nelson, Huy~L. Nguy$\tilde{\hat{\mbox{e}}}$n, and David~P. Woodruff.
\newblock On deterministic sketching and streaming for sparse recovery and norm
  estimation.
\newblock {\em Lin. Alg. Appl.}, 441:152--167, January 2014.
\newblock Preliminary version in RANDOM 2012.

\bibitem[P{\v a}t08]{PatrascuThesis}
Mihai P{\v a}tra{\c s}cu.
\newblock {\em Lower bound techniques for data structures}.
\newblock PhD thesis, Massachusetts Institute of Technology, 2008.

\bibitem[PD06]{patrascuInformationTransfer}
Mihai P{\v a}tra{\c s}cu and Erik~D. Demaine.
\newblock Logarithmic lower bounds in the cell-probe model.
\newblock {\em SIAM Journal on Computing}, 35(4):932--963, 2006.

\bibitem[PDGQ05]{PikeDGQ05}
Rob Pike, Sean Dorward, Robert Griesemer, and Sean Quinlan.
\newblock Interpreting the data: Parallel analysis with {Sawzall}.
\newblock {\em Scientific Programming}, 13(4):277--298, 2005.

\bibitem[PTW10]{panigrahyMetric}
Rina Panigrahy, Kunal Talwar, and Udi Wieder.
\newblock Lower bounds on near neighbor search via metric expansion.
\newblock In {\em Proceedings of the 51st Annual IEEE Symposium on Foundations
  of Computer Science (FOCS)}, pages 805--814, 2010.

\bibitem[TZ12]{ThorupZ12}
Mikkel Thorup and Yin Zhang.
\newblock Tabulation-based 5-independent hashing with applications to linear
  probing and second moment estimation.
\newblock {\em SIAM J. Comput.}, 41(2):293--331, 2012.

\bibitem[Woo04]{Woodruff04}
David~P. Woodruff.
\newblock Optimal space lower bounds for all frequency moments.
\newblock In {\em Proceedings of the 15th Annual ACM-SIAM Symposium on Discrete
  Algorithms (SODA)}, pages 167--175, 2004.

\bibitem[Woo07]{WoodruffThesis}
David~P. Woodruff.
\newblock {\em Efficient and Private Distance Approximation in the
  Communication and Streaming Models}.
\newblock PhD thesis, Massachusetts Institute of Technology, 2007.

\bibitem[Yao81]{Yao81}
Andrew Chi-Chih Yao.
\newblock Should tables be sorted?
\newblock {\em J. ACM}, 28:615--628, 1981.

\end{thebibliography}

\end{document}